\input{glyphtounicode}
\ifdefined\pdfinterwordspaceon
  \pdfinterwordspaceon
\fi
\documentclass[11pt]{article}
\usepackage[T1]{fontenc}
\usepackage{lmodern}
\usepackage[margin=1in]{geometry}
\usepackage{amsmath,amssymb,amsthm}
\usepackage{booktabs}
\usepackage[hidelinks]{hyperref}
\hypersetup{
  pdftitle={Sparse Disapproval Guarantees a Nonempty Hare Core},
  pdfauthor={Jiarui Fang},
  pdfsubject={Hare-core existence under at most two disapprovals per voter},
  pdfkeywords={approval voting, committee core, Hare quota, sparse disapproval, co-rank-two},
  pdfinfo={
    Affiliation={Boston University},
    ContactEmail={baymin@bu.edu},
    ORCID={0009-0006-9100-0445}
  }
}
\newtheorem{theorem}{Theorem}
\newtheorem{lemma}[theorem]{Lemma}

\newcommand{\N}{\mathcal N}

\title{Sparse Disapproval Guarantees a Nonempty Hare Core}
\author{Jiarui Fang\\
\small Boston University\\
\small \href{mailto:baymin@bu.edu}{\texttt{baymin@bu.edu}}\\
\small \href{https://orcid.org/0009-0006-9100-0445}{ORCID: 0009-0006-9100-0445}}
\date{September 2026}

\begin{document}
\maketitle

\begin{abstract}
An approval committee is Hare-core stable if no coalition meeting the Hare
quota can strictly improve by moving to another candidate set.  Whether every
approval election has such a committee remains open.  We prove nonemptiness
when each voter disapproves at most two candidates, with no bounds on the
numbers of candidates, seats, or voter types.  The result also permits
arbitrary positive rational voter weights.  Our deterministic rule represents
a committee by its missing set.  It first maximizes weighted coverage of
two-candidate disapproval sets and then maximizes total disapproval incidence.
An exact coverage inequality excludes targets one seat below the committee.
The incidence objective excludes unanimous equal-size targets, while targets
of size at most $k-2$ cannot improve any voter.  Two implementation-level
independent verifiers audit overlapping finite grids.  The symbolic proof,
not this bounded
enumeration, establishes the theorem's unbounded quantifiers.  The argument
identifies complement-side coverage as a tractable mechanism for a broad
parameter range within a sharply defined preference domain.
\end{abstract}

\section{Introduction}

An approval election consists of a finite voter set $\N$, a candidate set
$C$ of size $m$, approval sets $A_i\subseteq C$, and a target committee size
$k$.  A nonempty set $T\subseteq C$ blocks a size-$k$ committee $W$ at the
Hare quota when
\[
  \sum_{i:\,|A_i\cap T|>|A_i\cap W|} w_i\ge \frac{|T|}{k},
  \tag{1}
\]
where the positive rational voter weights sum to one.  A committee is in the
Hare core if it has no blocker.  Targets may overlap the committee.  For
example, when $k=3$, a singleton target reaches its Hare quota at one third
of the total voter weight.  If all those voters strictly gain, it blocks;
equality at the quota already permits blocking.

The core was introduced as a strong proportional-representation requirement
for approval committees by Aziz et al.~\cite{aziz2017}.  General deterministic
nonemptiness is open.  The publicly available results verified for this comparison
establish nonemptiness for $k\le8$ or $m\le15$~\cite{peters2025}, and for at
most seven weighted voter types~\cite{becker2026}.  Other positive results
impose interval or related preference structure~\cite{pierczynski2022}.
Approval-based apportionment guarantees core stability on a multi-copy party
domain~\cite{brill2024}.  That model constrains how candidates are grouped,
rather than how many candidates each voter disapproves.  Berker et
al.~\cite{berker2026} give an exact mixed-integer and duality framework with
additional special cases.

Our restriction is different from these parameter and structural bounds.
Write
\[
  H_i=C\setminus A_i
\]
for voter $i$'s disapproval set, and assume only $|H_i|\le2$.  This is a
dense-approval or near-consensus regime, but our motivation is structural
rather than empirical.  Each disapproval set is an empty set, a vertex, or
an edge.  Selecting a missing candidate set therefore becomes a weighted
vertex-edge coverage problem.  This complement-side representation permits
arbitrary $m$, $k$, and numbers of voter types.  A targeted primary-source
audit used ``bounded disapproval,'' ``two-veto,'' ``dense approvals,'' and
``complement rank'' terminology.  It found no equivalent theorem, so we
describe the result only as apparently new after a targeted audit.

\section{Selection rule and main theorem}

The endpoint $m=k$ is immediate by taking $W=C$.  Assume henceforth that
$m>k\ge2$ and put $s=m-k\ge1$.  A committee is represented by its missing
set $B=C\setminus W$, where $|B|=s$.  Define
\[
 e(B)=\sum_{i:\,|H_i|=2} w_i\,\mathbf 1[H_i\cap B\ne\varnothing],
 \qquad
 j(B)=\sum_i w_i|H_i\cap B|.
 \tag{2}
\]
Fix a total order on $C$.  Choose a size-$s$ set $B$ lexicographically
maximizing $(e(B),j(B))$.  If several sets remain, choose the first under the
induced lexicographic order on size-$s$ subsets, and return $W=C\setminus B$.

\begin{theorem}[Co-rank-two core theorem]\label{thm:main}
For every $m\ge k\ge2$, every approval election in which $|C\setminus
A_i|\le2$ for every positive-weight voter has a nonempty Hare core.  When
$m>k$, every committee associated with an $(e,j)$-maximizing missing set is
in the core; in particular, the fixed tie-break returns a core committee.
\end{theorem}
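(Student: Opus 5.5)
The plan is to fix an $(e,j)$-maximizing missing set $B$ with $W=C\setminus B$, suppose a nonempty target $T$ blocks $W$, and derive a contradiction. The first step narrows the possible sizes of $T$. Since $w$ sums to one, (1) forces $|T|\le k$. For any voter, $|A_i\cap W|=k-|H_i\cap W|\ge k-2$ and $|A_i\cap T|\le |T|$. A strict gain therefore needs $|T|\ge k-1$, and it remains to treat $|T|=k$ and $|T|=k-1$. The case $m=k$ is trivial, and the tie-break only selects one maximizer, so it suffices to show that every maximizer is stable.

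\emph{Equal-size targets, $|T|=k$.} Here the quota is the whole weight, so every positive-weight voter strictly gains. Since $|A_i\cap X|=|X|-|H_i\cap X|$, this means $|H_i\cap T|<|H_i\cap W|$ for every voter. I compare $B$ with $B'=C\setminus T$, which has size $s$. Then
\[
j(B')-j(B)=\sum_i w_i\bigl(|H_i\cap W|-|H_i\cap T|\bigr)>0 .
\]
For coverage, take an edge voter covered by $B$, so $|H_i\cap B|\ge1$ and hence $|H_i\cap W|\le1$. The strict gain then forces $|H_i\cap T|=0$, so $H_i\subseteq B'$ and the voter is still covered. Thus $e(B')\ge e(B)$ and $j(B')>j(B)$, which contradicts lexicographic maximality. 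This is where the secondary objective $j$ is used.

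\emph{Targets one seat below, $|T|=k-1$.} A gaining voter needs $|A_i\cap T|=k-1>|A_i\cap W|$. This requires $T\subseteq A_i$ and $|H_i\cap W|=2$. So $H_i$ is an edge with $H_i\subseteq W\setminus T$, and it is uncovered by $B$. The uncovered edge weight $U$ therefore satisfies $U\ge (k-1)/k$, while the covered edge weight is at most $1-U\le 1/k$. I then use an exchange $B\mapsto B-b+v$ with $v\in W\setminus T$ and $b\in B$.

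The loss from removing $b$ is at most the weight of edges covered only by $b$. Summing over $b$ gives at most $1/k$, so some $b$ loses at most $1/(ks)$. The blocking edges all lie in $W\setminus T$, which has $r\le k$ vertices. Each edge has two endpoints, so some $v$ meets uncovered edges of weight at least $2U/r\ge 2(k-1)/k^2$. The exchange strictly increases $e$ once $2(k-1)s>k$. This holds for all $k\ge2$ and $s\ge1$ except the boundary case $k=2$, $s=1$.

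I expect this ``exact coverage inequality'' to be the main obstacle, and in particular its equality case. For the boundary case I would track the equalities exactly: $U=1/2$, the single element of $B$ is $T$, all covered edges contain that element, and $W$ itself is the uncovered edge. Sharpening the loss bound should give strictness, using that an edge containing both $b$ and $v$ is not lost. If it does not, the same swap should be compared under $j$, in the way the $|T|=k$ case uses it. Once both sizes are excluded, no blocker exists, so every maximizer, and in particular the tie-broken one, lies in the Hare core.
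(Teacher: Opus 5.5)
Your treatment of targets of size at most $k-2$ and of size $k$ is essentially the paper's; comparing $B$ with $C\setminus T$ under $(e,j)$ is exactly its step from (5).

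\textbf{Different route for size-$(k-1)$ targets.} The paper argues globally. A uniformly random $s$-set meets a fixed pair with probability $p_{m,s}=1-k(k-1)/(m(m-1))>1/k$. Hence the maximizer has $e(B)\ge p_{m,s}w_2$, and the total uncovered two-disapproval weight is below $(k-1)/k$ for every $(k,s)$. That argument has no exceptional case and never refers to $T$.

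Your single-swap exchange is more elementary and proves more for this case. Any $B$ with no $e$-improving swap $b\to v$ already defeats every size-$(k-1)$ target. The size-$k$ case still needs global $(e,j)$-maximality, because $C\setminus T$ can be far from $B$. The price of the local route is the exceptional case $(k,s)=(2,1)$.

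One small correction: in the vertex-averaging step, use the gainer weight $G\ge(k-1)/k$ of the edges inside $W\setminus T$, not the total uncovered weight $U$. Uncovered edges meeting $W\cap T$ are not counted twice in the sum over $W\setminus T$, so the bound $2U/r$ is not justified. The bound $2G/r\ge 2(k-1)/k^2$ gives the same conclusion.

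\textbf{The case $(k,s)=(2,1)$ is left open.} You flag it but do not close it, and the proof is incomplete without it. Your sharpening does work, and the key is the choice of $v$, not an appeal to $j$.

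Write $B=\{b\}$ and $W=\{a,c\}$, and let $w_{xy}$ be the weight of voters with $H_i=\{x,y\}$. The gainers are exactly the voters with $H_i=\{a,c\}$, so $G=w_{ac}\ge 1/2$, and every covered edge contains $b$. The edge through $b$ and the incoming vertex stays covered. So the swap $b\to a$ changes $e$ by $w_{ac}-w_{bc}$, and the swap $b\to c$ changes it by $w_{ac}-w_{ab}$. Since $w_{ab}+w_{bc}\le 1-w_{ac}\le 1/2$, the smaller of $w_{ab}$ and $w_{bc}$ is at most $1/4<w_{ac}$. Hence one of the two swaps strictly increases $e$.

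Your fallback of comparing the same swap under $j$ would not suffice on its own. With $w_{ac}=w_{bc}=1/2$, the swap $b\to a$ ties in both $e$ and $j$.
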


The rule is single-valued, explicit, and finite once the candidate order is
fixed.  We do not claim a polynomial-time
implementation; exhaustive evaluation over all $\binom ms$ missing sets
establishes the mathematical selection rule and is the method used by the
supplied verifier.

\section{Proof}

A target of size $t>k$ cannot block because its quota $t/k$ exceeds the
total voter weight one.  We may therefore restrict attention to a nonempty
target $T$ of size $t\le k$.  Write $T=C\setminus D$ and
$d=k-t$.  Then $|D|=s+d$.  Direct cancellation gives, for every voter,
\[
 |A_i\cap T|-|A_i\cap W|
 =-d+|H_i\cap D|-|H_i\cap B|.
 \tag{3}
\]

\subsection{Targets of size at most \texorpdfstring{$k-2$}{k-2}}

If $d\ge2$, the right side of (3) is at most $-d+|H_i|\le0$.  No voter
strictly gains, so such a target cannot block.

\subsection{Targets of size \texorpdfstring{$k-1$}{k-1}}

Now $d=1$.  Strict gain in (3) requires
\[
 |H_i\cap D|-|H_i\cap B|\ge2.
\]
Thus every gainer has $|H_i|=2$, has $H_i\subseteq D$, and has
$H_i\cap B=\varnothing$.  In particular, every gainer is a two-disapproval
voter not counted by $e(B)$.

\begin{lemma}\label{lem:prob}
A uniformly random size-$s$ set meets any fixed two-set with probability
\[
 p_{m,s}=1-\frac{k(k-1)}{m(m-1)}>\frac1k.
 \tag{4}
\]
\end{lemma}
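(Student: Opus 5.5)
The plan is to compute the complementary probability by direct counting, and then reduce the inequality in (4) to a comparison of two integers using the standing assumption $m>k\ge2$ (so $s=m-k\ge1$).

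\emph{Exact value.} Fix a two-set $P\subseteq C$. A size-$s$ set misses $P$ exactly when it is a size-$s$ subset of $C\setminus P$, and there are $\binom{m-2}{s}$ of these among $\binom ms$ equally likely choices. I will write both binomials as factorials and cancel $s!$. Since $m-s=k$ and $m-2-s=k-2$, this leaves
\[
 \frac{\binom{m-2}{s}}{\binom ms}
 =\frac{(m-2)!\,k!}{m!\,(k-2)!}
 =\frac{k(k-1)}{m(m-1)}.
\]
Taking complements gives the formula for $p_{m,s}$ in (4). As a sanity check, the formula remains valid in the edge case $s>m-2$, i.e.\ $k<2$, where the count is $0$; that case is excluded here anyway.

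\emph{Strict inequality.} The claim $p_{m,s}>1/k$ is equivalent to
\[
 \frac{k(k-1)}{m(m-1)}<1-\frac1k=\frac{k-1}{k}.
\]
Because $k\ge2$, the factor $k-1$ is positive, so I can divide by it. Clearing the positive denominators then reduces the claim to $k^2<m(m-1)$. Since $m\ge k+1$, we have
\[
 m(m-1)\ge(k+1)k=k^2+k>k^2,
\]
which finishes the proof.

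The only points needing care are the cancellation in the binomial ratio and the use of $k\ge2$ to divide by $k-1$. Neither is a real obstacle, since the lemma is an elementary counting fact. Its role in the paper is presumably an averaging argument: the mean of $e$ over random size-$s$ sets is $p_{m,s}$ times the total two-disapproval weight, so the maximizer $B$ leaves less than a $1-1/k$ fraction of that weight uncovered.
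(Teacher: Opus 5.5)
Your proof is correct and follows the paper's route: you count the $\binom{m-2}{s}$ avoiding sets, then clear denominators using $k\ge2$ and $s\ge1$. The only difference is cosmetic. You divide out $k-1$ first and reduce to $k^2<m(m-1)$, whereas the paper expands $k[m(m-1)-k(k-1)]-m(m-1)$ into $(k-1)((2s-1)k+s(s-1))$; your reduction is slightly shorter.
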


\begin{proof}
The avoidance probability is
$\binom{m-2}{s}/\binom ms=k(k-1)/(m(m-1))$.  With $m=k+s$,
\[
 k[m(m-1)-k(k-1)]-m(m-1)
 =(k-1)((2s-1)k+s(s-1))>0,
\]
which is exactly the strict inequality in (4).
\end{proof}

Let $w_2$ be the total weight of two-disapproval voters.  Averaging (2) over
random $B'$ and using maximality gives $e(B)\ge p_{m,s}w_2$.  Hence every
size-$(k-1)$ target has gainer weight at most
\[
 w_2-e(B)\le(1-p_{m,s})w_2<1-\frac1k=\frac{k-1}{k}.
\]
This is strictly below its Hare quota.

\subsection{Targets of size \texorpdfstring{$k$}{k}}

Here $d=0$, so blocking requires every positive-weight voter to gain.  Suppose
that happened for some other size-$s$ missing set $D$.  Equation (3) yields
\[
 |H_i\cap D|>|H_i\cap B| \qquad\text{for every }i.
 \tag{5}
\]
An empty $H_i$ makes (5) impossible.  For each two-disapproval voter, (5)
implies that every pair hit by $B$ is also hit by $D$, and every pair missed
by $B$ is hit by $D$.  Therefore $e(D)\ge e(B)$.  Strict inequality
contradicts the first lexicographic objective.  If equality holds, multiply
(5) by the positive weight $w_i$ and sum over all voters.  This gives
$j(D)>j(B)$, contradicting the tie-break.  Thus no size-$k$ target blocks.
Together with the two preceding cases this proves Theorem~\ref{thm:main}.

\section{Exact verification}

The primary set-based verifier represents each anonymous profile as a tuple
of immutable disapproval sets.  For $m=4,5,6$, it tests every $2\le k<m$ and
every profile with total integer multiplicity at most four.  For $m=7$, the
multiplicity bound is three.  This grid contains 14 $(m,k)$ rows and checks
101,851 profile--parameter cases, 236,801 tied maximizers, and 12,020,946
literal committee--target inequalities.

The implementation-level independent verifier uses bitmasks, weighted
type-count vectors, and a separate enumeration architecture.  It repeats the
same 14 rows.  It also adds $(m,k)=(3,2)$ with multiplicity at most four and
all $2\le k<8$ at $m=8$ with multiplicity at most two.  It checks 106,620
profile--parameter cases, 265,371 tied maximizers, and 15,499,790 literal
inequalities.  Both implementations construct every tied maximizer, not
merely one.

All verifier arithmetic is integral, and both scripts refuse to run when
Python assertions are disabled.  The grids enumerate only the stated small
integer multiplicities.  They do not exhaust arbitrary denominators of
rational weights.  Lemma~\ref{lem:prob} and the symbolic proof establish
those unbounded quantifiers; the overlapping grids provide regression and
falsification evidence only.

\section{Scope and open boundary}

The theorem permits arbitrary positive rational weights, repeated ballots,
unapproved candidates, and targets overlapping the committee.  Equality in
the Hare quota counts as blocking throughout.  The theorem does not cover a
voter disapproving three candidates, does not resolve co-rank-three, and does
not imply unrestricted core nonemptiness.  Its contribution is a specialist
restricted-domain theorem, not evidence that dense approvals are empirically
universal.  The complement-side coverage and tie-break mechanism explains
why co-rank-two is tractable.  Whether a higher-order objective can control
deficit-one targets without losing the equal-size tie-break remains open.
The present averaging proof alone does not supply such an extension.

\section*{Reproducibility and AI assistance}

The package README gives a one-command exact replay and PDF build.  GPT-5.6 Sol
was used during proof exploration, verifier development,
literature-query formulation, and manuscript preparation.  The author
reviewed the statements, proofs, citations, and executable artifacts and
takes responsibility for the content.

\section*{Code availability}

The verification code and reproducibility materials are available from the
\url{https://github.com/Baymax-ray/approval-core-corank-two}.

\small
\bibliographystyle{unsrt}
\bibliography{references}
\end{document}